\renewcommand{\section}[1]{%
\bigskip
\begin{center}
\begin{Large}
\normalfont\scshape #1
\medskip
\end{Large}
\end{center}}
\renewcommand{\subsection}[1]{%
\bigskip
\begin{center}
\begin{large}
\normalfont\itshape #1
\end{large}
\end{center}}
\renewcommand{\subsubsection}[1]{%
\vspace{2ex}
\noindent
\textit{#1.}---}
\renewcommand{\tableofcontents}{}
\newtheorem{thm}{Theorem}
\DeclareMathAlphabet{\mathpzc}{OT1}{pzc}{m}{it}
\newtheorem{lemo}[thm]{Lemma}
\newtheorem{proposition}[thm]{Proposition}  
\newcommand{\cp}{{\mathcal P}}
\newcommand{\RR}{{\mathbb R}} 
\newcommand{\cP}{{\mathcal P}}
\begin{document}
\begin{flushright}
Version dated: \today
\end{flushright}
\bigskip
\noindent Axiomatic properties of species tree inference

\bigskip
\medskip
\begin{center}

\noindent{\Large \bf Axiomatic opportunities and obstacles for inferring a species tree  from gene trees}
\bigskip



\noindent {\normalsize \sc Mike Steel$^1$, and Joel D. Velasco$^{2}$}\\
\noindent {\small \it 
$^1$Allan Wilson Centre for Molecular Ecology and Evolution, University of Canterbury, Christchurch, New Zealand}\\
$^2$Texas Tech University, Department of Philosophy, Box 43092, Lubbock, TX 79409, USA\\
\end{center}
\medskip
\noindent{\bf Corresponding author:} Mike Steel, Mathematics and Statistics, University of Canterbury, Christchurch, New Zealand E-mail: mike.steel@canterbury.ac.nz\\


\vspace{1in}

\newpage
\subsubsection{Abstract}The reconstruction of a central tendency `species tree' from a large number of conflicting gene trees is a central problem in systematic biology. Moreover, it becomes particularly problematic when taxon
coverage is patchy, so that not all taxa are present in every gene tree. Here, we list four apparently  desirable properties that a method for estimating a species tree from gene trees could have (the strongest property states that building a species tree from input gene trees and then pruning leaves gives a tree that is the same as, or more resolved than,
the tree obtained by first removing the taxa from the input trees and then building the species tree).   We show that while it is technically possible to simultaneously satisfy
these properties when taxon coverage is complete, they cannot all be satisfied in the more general supertree setting.  In part two, we discuss a concordance-based consensus method based on Baum's `plurality clusters', and an extension to concordance supertrees. \\


\vspace{1.5in}

\section{Introduction}
  
Reconstructing a rooted phylogenetic species tree from a collection of gene trees (one for each genetic locus) can be viewed as a type of voting procedure.  Each locus supports a gene tree  and  tree reconstruction seeks to return a species tree based on the relative support of different trees from the population of voters (trees).  In social choice theory, 
 Arrow's theorem \citep{arr} has long played a prominent role. This theorem demonstrates that seemingly reasonable and desirable criteria for converting individual rankings of candidates into a community-wide ranking of candidates cannot be simultaneously satisfied.    In  phylogenetics, similar questions arise as to whether methods exist for combining trees so as to satisfy desirable properties (`axioms'), and a number of authors have shown that various combinations of axioms are impossible \citep{bar, bar2, day, mcm, mcm2, steel, tha}.  

In this short note, we describe some further results based on a slightly different set of assumptions that are appropriate to settings where taxon coverage across loci can be patchy \citep{san}, and where consensus methods (which require complete taxon coverage) must be replaced by more general supertree approaches.  We show that certain axioms can be satisfied in the consensus setting where taxon coverage is complete across loci, though by somewhat contrived consensus methods rather than the standard ones in common use.  However when we move to the supertree setting, where taxon coverage can be incomplete across loci,  an Arrow-type obstacle arises. We briefly discuss the biological implications of these two results, and  then consider the properties of a particular consensus tree approach (`plurality clusters' of \cite{bau, bau2})  and how this could  be extended to the supertree setting.

\subsection{Axioms for reconstructing a species tree from gene trees}

Formally, a {\em species tree estimator} is a  function  $\psi$ that assigns a rooted phylogenetic $X$-tree to any profile (i.e. sequence) $\cp=(T_1,\ldots, T_k)$ of trees at different loci, where $X$ is the set of taxa that occur in at least one tree.  Throughout this paper, all trees are rooted phylogenetic trees, and so each tree can be thought of as a hierarchy (i.e. a collection of subsets of the nonempty leaf set $Y$, containing $Y$ and the singletons $\{y\}: y \in Y$, and satisfying the nesting property that any two sets are either disjoint or one is a subset of the other). 

If each tree $T_i$ has the same leaf set $X$ then $\psi$ constructs a {\em consensus tree},
while if the leaf set of the trees $T_1,\ldots, T_k$ are not all equal to $X$ (due to patchy taxon coverage across loci) then $\psi$ constructs a {\em supertree}.
A  tree reconstruction procedure is regarded as fully deterministic (e.g. in the case of ties, as with equally most parsimonious trees, one might take the strict consensus of the resulting trees). 

In order to state the four axioms we first make two key definitions. Given a rooted phylogenetic $X$-tree $T$ and a subset $W$ of taxa let $T|W$ denote the rooted phylogenetic tree
that $T$ induces on the leaf set $X \cap W$. That is, $T|W$ is the rooted phylogenetic $(X \cap W)$-tree obtained from $T$ by taking the minimal subtree of $T$ that connects the leaves
in $X \cap W$ and then suppressing any vertices that have just one outgoing arc.  Notice that $W$ need not be a subset of $X$, and $W$ could even be disjoint from $X$ in which case $T|W$ is the empty set.  Given a profile (sequence) of trees $\cp = (T_1, \ldots, T_k)$ and any subset $W$ of $X=\cup_{i=1}^k X_i$, let $\cp|W =(T_1|W, \ldots, T_k|W)$.  In case one of these trees is the empty set, we will delete it from the profile, while retaining the ordering of the remaining trees, to obtain a shorter profile. 

Consider then a profile $\cp = (T_1, \ldots, T_k)$ (for any $k \geq 1$), of rooted phylogenetic trees, where $X_i$ is the leaf set of $T_i$ for each $i$, and $X=\cup_{i=1}^k X_i$, and the following four
conditions: 
\begin{itemize}
\item[{\bf (A1)}] (`Unrestricted domain') In the consensus setting (i.e. $X_i=X$ for all $i$), or the more general super-tree setting (where the $X_i$ are allowed to differ) $\psi(\cp)$  is a  rooted phylogenetic $X$-tree for any choice of $\cp$. 
\item[{\bf (A2)}]   (`Unanimity') When $\cp = (T, T, \ldots, T)$ for a fixed tree $T$, then  $\psi(\cp) = T$.
\item[{\bf (A3)}] (An `Irrelevance' axiom) 
Suppose that  $X_i$ contains just one or two taxa, each of which is present in at least one of the other $k-1$ taxon sets.   Let $\cp'$ be the profile obtained by removing $T_i$ from $\cp$.  Then  $\psi(\cp')= \psi(\cp)$. 
\item [{\bf (A4)}] (A weak `Independence' condition).   
For any non-empty subset $Y$ of $X$ the tree $\psi(\cp|Y)$ coincides with, or is refined by, the tree $\psi(\cP)|Y$.
\end{itemize}

In words, (A3) says that if one rooted tree has at most two taxa, and these are already present in other taxon sets, then removing this tree should not alter the relationships of taxa on the final tree returned by the method.   The idea here is that such a trivial tree carries no relevant phylogenetic information, so it should not affect the outcome of the method.

Condition (A4) states that if we build a species tree from input gene trees and then prune some of the leaves, the resultant tree should be the same as, or perhaps a more resolved version of, the species tree we would obtain by first removing those taxa from the input gene trees and then building our species tree.

\section{Results for consensus trees and supertrees}

In the consensus setting, where all the input trees have the same leaf set, condition (A3) holds vacuously, provided that at least three leaves are present. In this case,  examples of methods that satisfies (A1) and (A2) include  the `strict consensus' and `majority rule consensus' tree (i.e. the tree that contains the clusters present in all trees in the profile, or in a majority of the trees in the profile, respectively).  However, each of these methods fail condition (A4).  
Fig.~1 shows why: If we take as input the two trees $((xa)b)(cy)$ and $((ya)b)(cx)$, their consensus (under strict or majority rule consensus) will be the star tree $(a,b,c,x,y)$.
However,  if we were to restrict each input tree  to the leaf subset $\{a,b,c\}$, the two  induced input trees would both become $(ab)c$, and so too would their consensus.  But $(ab)c$ is neither equal to, nor revolved by the (star) tree obtained from the star tree $(a,b,c,x,y)$  by restricting to leaf set $\{a,b,c\}$.  This example also shows that some other consensus methods, including  the $R^*$ method described by \cite{bry}, do not satisfy (A4).

 \begin{figure}[h]
  \begin{center}
\resizebox{12cm}{!}{
\includegraphics{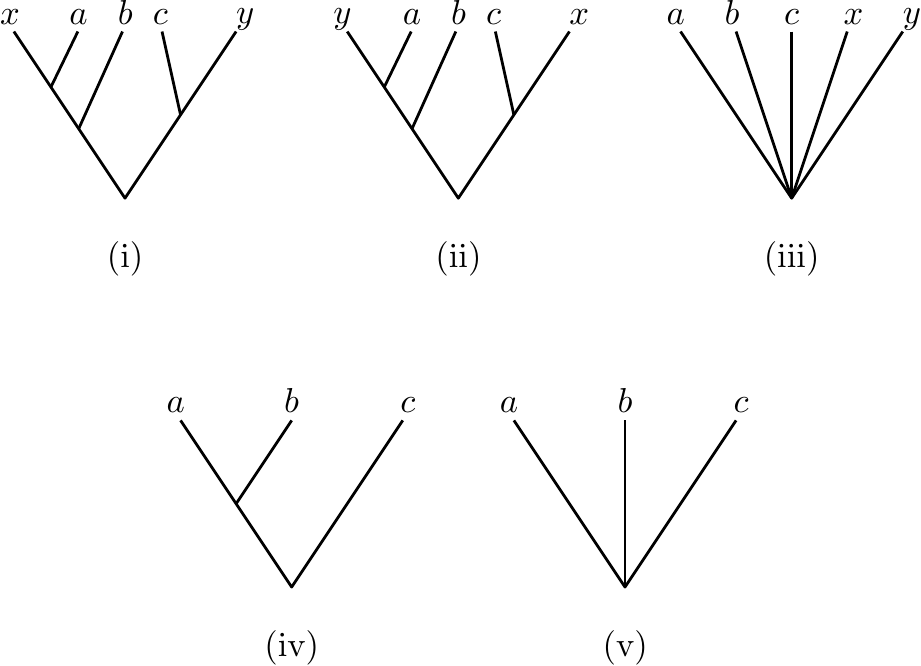}
}
\caption{A case where (A4) fails for consensus methods such as strict or majority rule consensus.   For the trees  $T_1$ and $T_2$ shown in (i) and (ii), their  consensus tree is the star tree shown in (iii). But restricting each input tree to $\{a,b,c\}$ results in the consensus tree shown in (iv), which is
neither the same as, nor resolved by the tree (v) obtained by restricting the output tree in (iii) to the taxon subset $\{a,b,c\}$.}
\end{center}
\end{figure}

Nevertheless, in this consensus setting, there exist methods that satisfies all four properties (A1)--(A4), a simple example of one is the method $\psi_1$ that simply returns the first tree in the profile. There are other methods also (not just projections onto the tree appearing at a given position in the profile). For example, consider the following simple modification of $\psi_1$, denoted $\psi_1^*$:   Given a profile $\cp=(T_1, \ldots, T_k)$, if there are three elements $a,b,c \in X$ for which $ab|c \in T_i|\{a,b,c\}$ for every $i=1, \ldots, k$, then let 
$\psi_1^*(\cp)=\psi_1(\cp)=T_1$; otherwise,  let $\psi_1^*(\cp)$ be the completely unresolved star tree. 

While $\psi_1$ and $\psi_1^*$ suffice to show that (A1)--(A4) can be satisfied in the consensus setting, these particular two methods have an undesirable property that would make them
quite unsuitable in practice, namely the output tree can depend on the order of the input trees.  It thus seems an interesting question as to whether there is a consensus method $\psi$ which, in addition to (A1)--(A4), always outputs the same tree regardless of the ordering of the trees in the profile $\cp$  (i.e. $\psi(T_1,\ldots, T_k) = \psi(T_{\sigma(1)}, \ldots, T_{\sigma(k)})$ for any permutation $\sigma$ of $\{1, \ldots, n\}$). We had initially believed that the Adams consensus method \citep{ada, bry} which satisfies this last property, along with (A1)--(A3) would also satisfy (A4); however it does not,  as the following simple example (due to R. C. Powers) shows. Consider two rooted trees $T_1$ and $T_2$ on leaf set $X=\{a,b,c,d,e,f\}$, where $T_1$ has as its nontrivial clusters $\{a,b,c,e\}$ and $\{a,b,c,d,e\}$, while $T_2$ has as its nontrivial clusters $\{b,c,d\}, \{b,c,d,e\}$ and $\{a,b,c,d,e\}$.
Let $Y = \{b,c,d,e\}$.  Then the Adams consensus tree of $(T_1|Y, T_2|Y)$ has the non-trivial cluster $\{b,c\}$, while if we take the Adams consensus of $T_1$ and $T_2$ and restrict to $Y$ then the resulting tree does not contain $\{b,c\}$ as a cluster.

\subsection{An Arrow-type impossibility result in the supertree setting}

In the supertree setting, it is also easy to find methods that simultaneously satisfy (A1), (A3) and (A4); a trivial example is the method that constructs
the completely unresolved star tree for all inputs.  

Satisfying (A1), (A2) and (A3) together is also fairly straightforward -- output the star tree unless, for some tree $T$, the input trees $(T_1,\ldots, T_k)$ have the property 
that $T_i = T$ for all $i$ in some nonempty subset $I$ of $\{1, \ldots, k\}$, and $T_j$ is a tree with just one or two leaves for all $j \in \{1, \ldots, k\} - I$; in which case we output the tree obtained from $T$ by attaching any  leaf (or leaves) in $X- \cup_{i \in I}X_i$ so that they are adjacent to the root.

Is there a method that satisfies all four properties (A1)--(A4) in the super tree setting? No. Even if we weaken (A1) to:

\begin{itemize}
\item[{\bf (A1$^-$)}] For any profile $\cP$ of rooted phylogenetic trees, $\psi(\cP)$is a rooted phylogenetic tree  on {\em all or some  of} the  taxa mentioned by the input trees;
\end{itemize}
(which allows us to delete taxa from the supertree if necessary) our main result shows that no such method can simultaneously accommodate these conditions.
Formally, we have the following result, whose proof is provided in the Appendix.

\begin{proposition}
\label{main}
No tree reconstruction procedure exists that simultaneously satisfies axioms (A1$^-$), (A2), (A3) and (A4) on all inputs. 
\end{proposition}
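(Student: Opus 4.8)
The plan is to pin down everything from a single, carefully chosen small profile. I would take four taxa $X=\{a,b,c,x\}$ and the profile $\cP=(T_1,T_2,T_3)$ where $T_1$ is the rooted triple $ax|b$ (leaf set $\{a,b,x\}$), $T_2$ is $bx|c$ (leaf set $\{b,c,x\}$), and $T_3$ is $cx|a$ (leaf set $\{a,c,x\}$). The first thing to record is a purely combinatorial fact: no rooted phylogenetic tree on $\{a,b,c,x\}$ restricts to a refinement of all three of $T_1,T_2,T_3$ at once. Indeed such a tree would need clusters $C_1$ with $a,x\in C_1$, $b\notin C_1$; $C_2$ with $b,x\in C_2$, $c\notin C_2$; and $C_3$ with $c,x\in C_3$, $a\notin C_3$; but these all contain $x$, so in the laminar family of clusters they are totally ordered by inclusion, and whichever one is smallest immediately forces a forbidden membership (e.g.\ $C_1\subseteq C_3$ forces $a\in C_3$). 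The whole task is therefore to show that axioms (A1$^-$)--(A4) nevertheless force $\psi(\cP)$ to be exactly such a tree.

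First I would nail down the leaf set of $\psi(\cP)$, so that the weakening in (A1$^-$) buys nothing. For each $u\in\{a,b,c\}$, restricting $\cP$ to $\{u,x\}$ turns two of the $T_i$ into the (unique) two-leaf tree on $\{u,x\}$ and the remaining one into the one-leaf tree on $\{x\}$; deleting that one-leaf tree by (A3) leaves a unanimous profile of two copies of the $\{u,x\}$-tree, so (A2) gives $\psi(\cP|\{u,x\})$ equal to that tree. Since the only tree with leaf set $\{u,x\}$ is that tree, (A4) applied with $Y=\{u,x\}$ can hold only if $\psi(\cP)$ still contains both $u$ and $x$; ranging over $u$ forces the leaf set of $\psi(\cP)$ to be all of $X$.

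Next I would extract the three clusters. Restricting $\cP$ to $Y_1=\{a,b,x\}$ turns $T_2$ and $T_3$ into two-leaf trees on subsets of the leaf set of $T_1$, so (A3) deletes both and the profile collapses to the singleton $(T_1)$; by (A2), $\psi(\cP|Y_1)=T_1=ax|b$. Then (A4) says $\psi(\cP)|\{a,b,x\}$ refines $ax|b$, so $\psi(\cP)$ has a cluster $C_1$ with $a,x\in C_1$ and $b\notin C_1$. The two symmetric restrictions, to $\{b,c,x\}$ and to $\{a,c,x\}$, yield clusters $C_2$ (with $b,x\in C_2$, $c\notin C_2$) and $C_3$ (with $c,x\in C_3$, $a\notin C_3$). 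Feeding $C_1,C_2,C_3$ into the combinatorial fact of the first paragraph gives the contradiction.

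The step I expect to need the most care is the leaf-set argument: it relies on reading ``$\psi(\cP|Y)$ is refined by $\psi(\cP)|Y$'' as a genuine relation between trees, so that a tree which has dropped a leaf does not count vacuously as a refinement, and on checking this for every two-element $Y$ of the form $\{u,x\}$. A related point is that (A3) requires each taxon of the deleted trivial tree to already occur elsewhere in the profile, which is precisely why the construction routes every pairwise overlap through the common leaf $x$ paired with one of $a,b,c$, and why a cyclic triple of trees --- rather than a single pair, which would always be jointly compatible when the overlap has size two --- is needed.
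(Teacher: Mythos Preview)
Your proof is correct and takes a genuinely different route from the paper's. The paper argues by counting: it fixes a Steiner triple system on $13$ points (with $26$ blocks), builds one input tree per block, and uses (A3) (via the STS property that distinct blocks meet in exactly one point) together with (A2) and (A4) to show that $\psi(\cP_f)$ must display the chosen triple on every block; since there are $3^{26}$ choices of triples but only $(23)!!<3^{26}$ rooted binary trees on $13$ leaves, the pigeonhole principle yields a contradiction. You instead exhibit a \emph{single} profile of three rooted triples on four taxa whose cyclic structure is directly incompatible, so no counting is needed.

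Your argument is considerably more elementary and localises the obstruction to four taxa and three input trees; the careful handling of (A1$^-$) via two-element restrictions is clean and, as you note, is exactly where the reading of ``refines'' as a same-leaf-set relation is used (the paper's proof relies on the same reading when it concludes that $T'_f$ has leaf set all of $X$). One thing the paper's approach buys that yours does not: because STS blocks meet in a single point, the paper only ever invokes (A3) to delete \emph{one}-leaf trees, whereas your reduction of $\cP|Y_1$ to $(T_1)$ deletes two-leaf trees. Thus the paper's proof also establishes the impossibility under the weaker variant of (A3) that permits removing only single-taxon trees; your argument needs the two-taxon clause of (A3) as stated. For the proposition as written, however, your proof is complete and arguably more transparent.
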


\subsection{Biological significance}

Suppose we have a fixed set $S$ of species. 
It is clear that (even in the consensus setting) any method for building a species tree from gene trees should allow the tree to change as more loci are sequenced and the gene trees for these loci are  included in the analysis (since the gene trees at later loci may, for example favour a different species tree). 

But suppose we fix the set of available loci, and instead try to build a tree by adding taxa.  We might try and construct a tree for some of the taxa and then sequentially try to attach each additional taxon in an optimal place in this tree. On occasions, an additional taxon may even allow us to resolve
the tree a bit better, but we do not wish to go back and rearrange the consensus or supertree we obtained at an earlier stage of the process. If our method satisfied (A4), we would be able to do this step-by-step construction.   When we can determine a tree for each locus on all the taxa we have at any given stage, the goal is achievable (since in that setting there are methods that satisfy (A1)--(A4)), albeit by somewhat contrived consensus methods, rather than commonly-used ones.   But where taxon coverage is inherently patchy between loci, Proposition~\ref{main} dashes any hope of a general method that would guarantee to  achieve that goal while also satisfying the clearly desirable properties (A1), (A2), and (A3).  

Violating (A4) means that adding, for example, taxon $d$ to the set can change how $a$, $b$, and $c$ are related to each other. So for example, $d$ might be closest to $a$ on some gene trees and closest to $c$ on others indicating that perhaps $a$ and $c$ are closer than previously thought. 

There is an important distinction to be made here. Imagine that building a species tree consists of moving from data to gene trees and then from gene trees to a species tree. One way in which adding a taxon can change the relationships between other taxa is by changing the gene trees themselves. So for example, at locus $i$ we might have the tree $((ab)c)$ while after adding taxon $d$ we might get $(((ad)c)b)$. Evidence for relationships is `holistic' in this way. Thus building a tree and then pruning some leaves does not necessarily yield the same tree as removing those taxa from the sequence data and then building the tree. This is a fact about the relationship between input data and gene trees. But this difference is perfectly consistent with satisfying (A4). (A4) asserts something about the relationship between gene trees and a species tree constructed from them. It says that pruning taxa from the gene trees can't change the relationships of the other taxa on the resulting species trees.

We want it to be the case that adding or removing taxa from the sequence data can change the relationships between other taxa on gene trees. It is less clear, but upon reflection also true that we do want it to be possible that adding or removing taxa from gene trees can change the relationships between other taxa on the resulting species trees. Thus axiom (A4) is too strong. One consequence of this is that the best methods that take gene tree inputs and output a species tree cannot proceed by simply adding one taxon at a time to an already existing tree but rather must consider all of the taxa at once.


\section{Concordance trees and plurality consensus}
In the second part of this note, we move from general axiomatic considerations
to the study of a particular class of consensus and supertree methods that are based
on the frequency of clades amongst the input trees. The methods considered in this
section will satisfy (A1), (A2) and (A3), but not (A4).

For the consensus setting, given a sequence $(T_1, T_2, \ldots, T_k)$ of rooted phylogenetic $X-$trees, define the {\em concordance factor} of any subset $A$ of $X$, denoted $cf(A)$, to be
the proportion of trees that contain $A$ as a cluster. We say that two subsets $A$ and $B$ of $X$ {\em overlap} if $A \cap B$ is a nonempty strict subset of $A$ and of $B$ (this is equivalent to $A$ and $B$ being incompatible in the sense that no tree could contain both sets as clusters). 
Concordance factors form the basis of some well-studied consensus methods, including:

\begin{itemize}
\item 
{\bf Strict consensus:} The tree having as its clusters those subsets $A$ of $X$ with $cf(A)=1$.

\item {\bf Majority consensus:} The tree having as its clusters those subsets $A$ of $X$ with $cf(A)>0.5$.

\item{\bf Majority (+) consensus:}   The tree having as its clusters those subsets  $A$ of $X$ with  $cf(A)>cf^{o}(A)$, where $cf^{o}(A)$ denotes the proportion of input trees that have a cluster that $A$ overlaps.

\item{\bf Greedy consensus:}  The  tree obtained by ranking the clusters present in the input trees according to their concordance factor, and constructing a set of clusters, beginning with the cluster of maximal $cf$-value, and adding further clusters in the order of their diminishing $cf$-values, omitting any clusters that overlap with any of the clusters so far accepted. Ties are broken arbitrarily. 

\end{itemize}

The literature on consensus methods is vast, with \citet{bry} providing a helpful survey. Strict and majority consensus trees are well studied, while the majority (+) consensus tree is more recent, with a mathematical analysis by \citet{don2} revealing how this approach can be characterized as a type of consensus median method.   The majority (+) consensus clusters always form a hierarchy (i.e. a tree) and this hierarchy contains the majority clusters, which in turn contain the strict clusters.

Whether a set of taxa is a majority (+) cluster depends on the proportion of input trees containing clades that contradict the set in question.  If all the  input trees  are fully resolved (binary) then $cf^{o}(A) = 1 - cf(A)$ and thus the majority (+) clusters are exactly the majority clusters.  However, when one or more of the trees is not fully resolved, there may exist majority (+) clusters that are not majority clusters (a simple example is provided by the two trees $((ab)c)$ and $(abc)$). 

Greedy consensus differs from the other methods in that the resulting consensus tree is not uniquely specified, since the possibility of ties means that one can obtain different trees according to how such ties are broken.  This would generally be regarded as an undesirable property,  since we would like a consensus method to output a tree that is independent of arbitrary choices.

 \cite{bau, bau2}  introduced the quantitative notion of a `concordance factor' in phylogenomics  -- `the proportion of the genome for which a given clade is true', and  `plurality' to  mean  that the concordance factor of the clade is higher than the concordance factor of any contradictory grouping.     In other words, Baum considers subsets $A$ of $X$ for which $cf(A) > cf(B)$ for every subset $B$ of $X$ which overlaps with $A$, and this notion has also been referred to as `frequency-difference' clusters by others (\cite{gol, don2, jan}), so we will mostly use this terminology here.   A crucial point is that frequency-difference clusters form a hierarchy and thereby a tree, namely the  {\em frequency-difference consensus}  tree, proposed by Pablo Goloboff, and implemented in his phylogenetic software package TNT \citep{gol}.   More recently, efficient algorithms for constructing the frequency-difference consensus tree and the majority (+) tree have been described in \cite{jan}.  

Fig.~2 shows an example of four distinct input trees (i)--(iv), for which the frequency-difference consensus tree (v) has a nontrivial cluster, while the majority (+)  (and so majority 
and strict) consensus trees is the unresolved star tree (vi).

 \begin{figure}[h]
  \begin{center}
\resizebox{12cm}{!}{
\includegraphics{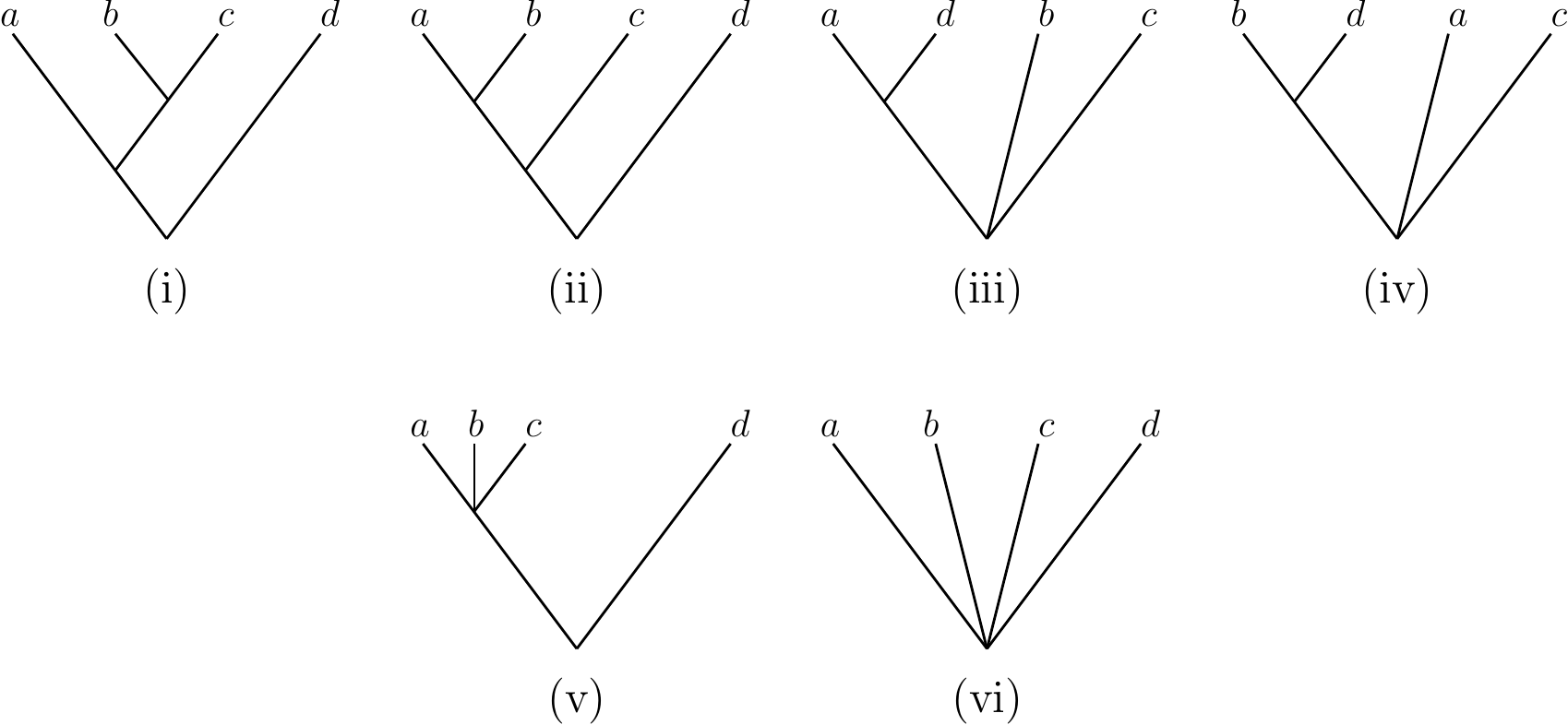}}
\caption{
(i)--(iv) Four input trees on the same leaf set; (v) the frequency-difference consensus tree, (vi) the majority (+) consensus tree.
}
\end{center}
\end{figure}

The relationship between frequency-difference consensus and the other consensus methods was investigated by \cite{don2}, and can be summarised as follows:
\begin{quote}
The frequency-difference consensus tree refines the majority (+) consensus tree (and thereby also the strict and majority consensus tree), and this refinement can be proper. 
In turn, the frequency-difference consensus tree is refined by any greedy consensus tree and this refinement can also be proper. 
\end{quote}
To see this, observe that the majority consensus tree is a (possibly proper) refinement of strict consensus and we described above how the majority (+) tree is a (possibly proper) refinement of the majority consensus tree. Now suppose that $A$ is a majority (+) cluster, so that $cf(A) > cf^{o}(A)$.  Let $B$ be any cluster that overlaps $A$.  Then $cf^{o}(A) \geq cf(B)$, by definition.
Thus, $cf(A) > cf^{o}(A) \geq cf(B),$ and so $cf(A)>cf(B)$. Since this holds for each cluster $B$ that overlaps $A$ it follows that $A$ is a frequency-difference cluster. Thus any majority (+) cluster is also a frequency-difference cluster, and so the  frequency-difference consensus tree refines the majority (+) consensus (and thereby also the majority and strict consensus) tree.

The example in Fig.~2 shows that the frequency-difference consensus tree can be strictly more resolved than the  majority (+) consensus tree (an example involving  three taxa is also possible if we allow an input tree to occur more than once: for input trees $(ab)c, (ab)c, (ac)b$ and $(bc)a$, the set $\{a,b\}$ is a frequency-difference cluster, but not a majority (+) cluster). 

Regarding the relationship between frequency-difference and greedy consensus methods,  suppose that $A$ is  a frequency-difference cluster.  Then $A$ appears higher in the ranking of $cf$-values than any overlapping cluster, and so it must be contained in every greedy consensus tree. The example of input trees of $(ab)c$ and $a(bc)$ suffices to show that a greedy consensus tree can be a proper refinement of the frequency-difference consensus tree.  Thus these consensus trees are introduced above in an order such that each tree is a refinement (and possibly a proper one) of each of the trees above it, with frequency-difference consensus fitting between the majority (+) and greedy consensus methods.

Notice that the consensus methods satisfy properties (A1), (A2) and (trivially)
(A3), but can fail (A4).  For example, the same profile used in Fig.~1 to show that strict consensus fails to satisfy (A4) also applies to the frequency-difference consensus tree (which is also a star tree for this profile). 

One advantage of frequency-difference consensus over (say) majority  consensus or strict consensus is that it avoids setting a particular threshold for concordance factors to reach (such as 0.5 in the case of majority consensus), which is mathematically convenient but would seem to have little biological rationale.  
We plan to discuss  the biological and philosophical relevance of frequency-difference consensus further in a subsequent paper  (Velasco and Steel, in preparation).

\subsection{Extension of concordance to the supertree setting}

Suppose now that we have as our input a sequence $(T_1, T_2, \ldots, T_k)$ of rooted phylogenetic trees on overlapping leaf sets.  We let $X_i$ denote the leaf set of $T_i$ for each $i$.  
We will say that a particular input tree $T_i$ {\em supports}  the triple $xy|z$ if $T_i$ contains $x$, $y$, and $z$ as leaves, and there is a clade of $T_i$ containing $x$ and $y$ but not $z$. We will say that $T_i$  {\em contradicts} $xy|z$ if $T_i$  supports either $xz|y$ or $yz|x$. Notice that a tree does not support or contradict a particular triple if it does not contain all three taxa as leaves and that it is possible to contain all three taxa and support none of the triples if some splits are unresolved.  

We can then  define a concordance factor using triplet relations as follows:
Let $G(a,b,c) =\{i\in \{1,\ldots, k\}: \{a,b,c\} \subseteq X_i\}$, and for each  $i \in G(a,b,c)$, write $xy|_iz$ if  $T_i|\{a,b,c\} = xy|z$.  Thus, $G(a,b,c)$ is the set of genes (loci) that are present in every one of the three taxa $a, b,c$, and $ab|_ic$ (for instance) means that gene $i$ supports $a$ and $b$ being sister taxa relative to $c$.
Given a subset $A$ of $X = \cup_{i=1}^k X_i$, define the concordance factor of a non-empty subset $A$ of $X$  be:
\begin{itemize}
\item 
$cf(A) := 1$ if $A=X$ or $A= \{x\}$ for some $x \in X$;
\item If $1< |A|<|X|$:
\begin{equation}
\label{supcon}
cf(A):=\frac{  |\{i \in \{1, \ldots, k\}:  aa'|_ib \mbox{ for all } a,a' \in A, b \in X-A \mbox{  with } i \in G(a,a',b) \}|}{|\{i \in \{1, \ldots, k\}:  i \in G(a,a',b)\mbox{ for some } a,a' \in A, b \in X-A\}|},
\end{equation}
provided the denominator is non-zero, otherwise set $cf(A)=0$. 
\end{itemize}

\bigskip

Stated slightly less precisely, for non-singleton proper subset $A$ of $X$
 the concordance factor of $A$ is the proportion of trees $T_i$ having at least two elements from $A$ and  one from outside $A$, for which $aa'|_ib$ for all $a,a' \in A$ and all $b$ not in $A$.  In other words, for a given tree $T_i$ to count positively toward the $cf$ of $A$, any two taxa in  $A$ present in $T_i$ must be more closely related to each other than to any taxon outside $A$ that is present in $T_i$.   This definition of concordance factor generalizes the earlier one, as we now point out. 

\begin{lemo}
If each input tree $T_i$ has the same leaf set, then for any non-empty subset $A$ of $X$, the concordance factor from Eqn. (\ref{supcon}) coincides with the earlier definition of concordance factor as defined in the consensus setting.
\end{lemo}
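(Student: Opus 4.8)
\emph{Proof proposal.} The plan is to reduce the asserted equality to a single combinatorial fact about one tree: for a rooted phylogenetic $X$-tree $T$ and a subset $A$ with $1<|A|<|X|$, the set $A$ is a cluster of $T$ if and only if $T|\{a,a',b\}=aa'|b$ for every pair of distinct $a,a'\in A$ and every $b\in X-A$. Granting this, the rest is bookkeeping: the cases $A=X$ and $A=\{x\}$ are trivial, since then $cf(A)=1$ under (\ref{supcon}) by fiat while under the consensus definition $A$ is a cluster of every tree (every hierarchy contains $X$ and all singletons). For $1<|A|<|X|$, the hypothesis that all input trees share the leaf set $X$ forces $G(a,a',b)=\{1,\ldots,k\}$ for every triple, and since $|A|\ge 2$ and $|X-A|\ge 1$ there is at least one such triple; hence the denominator of (\ref{supcon}) equals $k$, and $cf(A)$ from (\ref{supcon}) is $\frac1k$ times the number of trees $T_i$ satisfying the triple condition above. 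The combinatorial fact then identifies this numerator with the number of trees having $A$ as a cluster, so $cf(A)$ equals the proportion of trees containing $A$ as a cluster, which is the consensus-setting definition.

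For the ``only if'' direction of the combinatorial fact, suppose $A$ is the cluster of a vertex $v$ of $T$. For $a,a'\in A$ and $b\in X-A$ the leaf $b$ does not lie below $v$, so the most recent common ancestor of $a$ and $b$ is a strict ancestor of $v$ and coincides with the most recent common ancestor of all three taxa, whereas the most recent common ancestor of $a$ and $a'$ lies at or below $v$; hence $a,a'$ form a cherry over $b$ in the restriction, i.e.\ $T|\{a,a',b\}=aa'|b$.

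The direction I expect to be the main obstacle is the converse. I would argue by contradiction: let $v$ be the most recent common ancestor of $A$ in $T$ and let $C$ be its cluster, so $A\subseteq C$, and suppose some $b\in C-A$ exists. Because $v$ is the most recent common ancestor of $A$, the set $A$ is not contained in the subtree of a single child of $v$, so there exist distinct $a,a'\in A$ whose most recent common ancestor is exactly $v$. Let $v_b$ be the child of $v$ whose subtree contains $b$. If some element of $A$ lies in the subtree of $v_b$, pick that element as $a$ and any $a'\in A$ in a different child-subtree of $v$: then the most recent common ancestor of $a$ and $b$ is strictly below $v$, which is the most recent common ancestor both of $a,a'$ and of $a',b$, so $T|\{a,a',b\}=ab|a'\neq aa'|b$. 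Otherwise no element of $A$ lies below $v_b$; taking $a,a'\in A$ with most recent common ancestor exactly $v$, all three pairwise most recent common ancestors equal $v$, so $T|\{a,a',b\}$ is the unresolved triple, again not $aa'|b$. Either way the triple condition fails, a contradiction; hence $C=A$ and $A$ is a cluster of $T$. This completes the combinatorial fact and, with the reduction above, the lemma.
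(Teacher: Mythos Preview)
Your proposal is correct and follows essentially the same approach as the paper: handle the trivial cases $A=X$ and $|A|=1$, observe that the shared leaf set forces the denominator of (\ref{supcon}) to equal $k$, and then invoke the characterization that $A$ is a cluster of $T_i$ precisely when $aa'|_ib$ holds for all $a,a'\in A$ and $b\in X-A$. The paper simply asserts this last biconditional as a known fact about rooted phylogenetic trees, whereas you supply a careful proof of both directions; your argument is sound and more self-contained, but the overall route is the same.
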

\begin{proof}
Suppose that $X_i=X$ for all $i$.
If $A=X$ or $A=\{x\}$ for some $x \in X$, then $cf(A) = 1$, as before, so assume that $1<|A|<|X|$.
Then since  $i \in G(a,a',b)$ for all choices of $a,a', b$ the denominator term in $cf(A)$ is $k$, and so:
$$cf(A)=\frac{1}{k}\times  |\{i: \mbox{ for all } a,a' \in A, b \in X-A,  aa'|_ib\}|.$$
Now $A$ is a cluster of $T_i$ precisely if for all $ a,a' \in A, b \in X-A$, we have  $aa'|_ib$, and so 
$cf(A)=\frac{1}{k}\times  |\{i: \mbox{  $A$ is a cluster of $T_i$} \}|$, which coincides with the earlier definition. 
\end{proof}

We can now check the obvious altered definitions of strict, majority, and frequency-difference consensus in this supertree setting. But something significant has happened. It is no longer a guarantee that strict and majority clusters will form a tree. That is, the set of all clusters which have $cf$-value $>0.5$ do not necessarily form a tree; indeed the same can be true even with a $cf$-value equal to 1. A simple example of this is five organisms $a, b, c, d,e$ with two input gene trees: $(ab)e$ and $(ac)d$. Now $cf(ab) = cf(ac) = 1$ but these clusters are incompatible.

However, it is still true that the frequency-difference clusters form a tree, and the resulting
supertree method will satisfy axioms (A1), (A2), and (A3), but not (A4). This is as it should be. Recall that a group $A$ is a frequency-difference cluster precisely if $cf(A) > cf(B)$ for any $B$ that is incompatible with $A$. When $cf$-values are defined in the `supertree' way this definition still leads to a tree as we now show. 
\begin{proposition}
\label{superplu}
The frequency-difference clusters form a hierarchy, and so form a tree.  
\end{proposition}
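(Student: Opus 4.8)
The plan is to show that no two frequency-difference clusters $A$ and $B$ can overlap (i.e. be incompatible): if every pair of frequency-difference clusters is compatible, then the collection of all such clusters, together with $X$ and the singletons, forms a hierarchy and hence a tree. So I would argue by contradiction: suppose $A$ and $B$ are both frequency-difference clusters that overlap, meaning $A\cap B$, $A\setminus B$, and $B\setminus A$ are all nonempty. The goal is to derive a contradiction with the defining inequalities $cf(A) > cf(B)$ and $cf(B) > cf(A)$ (each holds because $A$ and $B$ are incompatible, so $B$ is one of the sets $A$ must beat, and vice versa).

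The heart of the argument is to compare the numerators and denominators of $cf(A)$ and $cf(B)$ from Eqn.~(\ref{supcon}) by looking at which loci contribute. Fix representatives $a \in A\setminus B$, $b \in B\setminus A$, and $c \in A\cap B$. For a locus $i$, consider its restriction $T_i|\{a,b,c\}$ when $i \in G(a,b,c)$. The key observation is a trichotomy on triples: if $T_i$ counts positively toward $cf(A)$ then in particular (taking the pair $a,c \in A$ and the outside element $b \in X - A$, noting $b \notin A$) we must have $ac|_i b$; symmetrically, if $T_i$ counts positively toward $cf(B)$ then, taking $b,c \in B$ and $a \in X - B$, we get $bc|_i a$. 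But $ac|_ib$ and $bc|_ia$ are contradictory triples on $\{a,b,c\}$ — a single tree $T_i$ cannot satisfy both. Hence the set of loci contributing positively to $cf(A)$ and the set contributing positively to $cf(B)$ are \emph{disjoint}, at least among loci in $G(a,b,c)$; I then need to check that every locus contributing positively to $cf(A)$ in fact lies in $G(a,b,c)$ (it contains some pair from $A$ and some element outside $A$, but does it contain $a$, $b$, and $c$ simultaneously?) — this is the point that needs care, and the correct move is to work with whatever three taxa actually witness the overlap on that particular locus rather than with globally fixed $a,b,c$.

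The cleanest way to close the gap is: let $N_A$ = number of loci counting positively toward $cf(A)$, $D_A$ = denominator of $cf(A)$, and similarly $N_B, D_B$. Every locus counting positively toward $cf(A)$ has at least two taxa in $A$ and one outside, so it lies in the denominator set of $cf(A)$; likewise a locus counting toward $cf(A)$ that also has two taxa in $B$ and one outside $B$ lies in the denominator set of $cf(B)$, and on such a locus the restriction to a suitable triple shows it \emph{cannot} count positively toward $cf(B)$. One then observes that a locus contributing positively to $cf(A)$ either (a) fails to have two-in-$B$-one-outside, in which case it does not appear in $D_B$ at all, or (b) does, in which case it appears in $D_B$ as a negative (non-contributing) instance. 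Running the symmetric analysis and adding up, the loci split so that $N_A \le D_A - N_B$-type bounds force $cf(A) + cf(B) \le 1$ — contradicting $cf(A) > cf(B)$ and $cf(B) > cf(A)$, which would give $cf(A) + cf(B) > 2\min\{cf(A),cf(B)\} \ge 0$ is too weak, so instead I'd push for the sharper statement that the contributing sets are disjoint subsets of a common pool, giving directly $cf(A) > cf(B) \ge$ (something $\ge cf(A)$), the impossibility.

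The main obstacle I anticipate is precisely the bookkeeping in the mixed case: a locus may have enough taxa from $A$ to be "about $A$" but not enough from $B$ (or vice versa), or it may contain only two of $a,b,c$, so the naive "restrict to $\{a,b,c\}$" argument does not directly apply to every relevant locus. Handling this requires arguing per-locus with the taxa that locus actually contains — showing that whenever a locus is simultaneously relevant to both $cf(A)$ and $cf(B)$ (appears in both denominators), it can contribute positively to at most one of them, and that a locus contributing positively to $cf(A)$ but invisible to $cf(B)$'s denominator still can't rescue the inequality. Once that local incompatibility is nailed down, the global inequality $cf(A)\le 1-cf(B)$ (hence the contradiction) follows by a straightforward count, and the hierarchy/tree conclusion is immediate.
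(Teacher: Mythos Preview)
Your first paragraph already contains the complete proof, and you have overlooked it. You correctly observe that if $A$ and $B$ are overlapping frequency-difference clusters, then by definition $cf(A) > cf(B)$ (since $B$ overlaps $A$) and $cf(B) > cf(A)$ (since $A$ overlaps $B$). But these two strict inequalities between real numbers are \emph{already} a contradiction: nothing further is required. Everything from your second paragraph onward --- the analysis of triples, the disjointness of contributing loci, the attempted bound $cf(A)+cf(B)\le 1$ --- is unnecessary, and the bookkeeping obstacles you anticipate are obstacles to proving something you do not need.

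The paper's proof is exactly this two-line observation, phrased at a slightly higher level of generality: for \emph{any} function $g:2^X\to\RR$, the collection
\[
H_g=\{A\subseteq X: g(A)>g(B)\text{ for all }B\text{ overlapping }A\}
\]
is automatically a hierarchy, since two overlapping members $A,A'$ of $H_g$ would force both $g(A)>g(A')$ and $g(A')>g(A)$. One then simply takes $g=cf$. No property of $cf$ beyond being real-valued is used; in particular, the triple-based structure of Eqn.~(\ref{supcon}) plays no role whatsoever in the argument.
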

\begin{proof}
We first establish a simple and general result.  Suppose that $g:2^X \rightarrow \RR$ is any function that assigns a real value to a subset of $X$. Then 
$$H_g: = \{A \subseteq X: g(A)>g(B) \mbox{ for all } B \mbox { that overlap } A\},$$
is a hierarchy, since if we suppose to the contrary that  $H_g$ contains two elements $A$ and $A'$  that overlap, then $g(A) > g(A')$, since  $A \in H_g$ and $A$ overlaps $A'$. Interchanging the roles of $A$ and $A'$, the reverse inequality also
holds, but this clearly is not possible. Thus $H_g$ cannot have two elements that overlap, and so $H_g$ is a hierarchy.

We now apply this general result for function $g(A)=cf(A)$ to deduce that: 
$$H_g: = \{A \subseteq X: g(A)>g(B) \mbox{ for all } B \mbox { that overlap } A\}$$
is a hierarchy.  
This completes the proof. 
\end{proof}

The problem of how to deal with data sets with patchy taxon coverage is of significant biological and mathematical interest. While the natural extensions of many consensus methods will often fail to form a tree in this setting, the frequency-difference method, which is a kind of plurality consensus method, will always yield a tree. Further, it satisfies some extremely plausible axioms (A1)--(A3) for what a supertree method should look like. While it fails to satisfy (A4), we have shown that no method could satisfy this independence condition while simultaneously satisfying (A1)--(A3). We believe that these facts together with its inherent plausibility, make the frequency-difference method worthy of more widespread usage and serious study. 

\section{Acknowledgments}

We thank F.R. McMorris, R.C. Powers, and David Bryant for several helpful comments, particularly concerning Adams consensus. We also thank an (anonymous) reviewer and the editors for  additional comments and advice.

\bibliographystyle{sysbio}
\bibliography{arrow_steel}


\section{Appendix: Proof of Proposition \ref{main}}

We employ a proof by contradiction; that is,  by supposing there were a method satisfying (A1$^-$)--(A4), we derive a contradiction. 

Our argument relies on the existence of a classic combinatorial object called a Steiner triple system (STS). This is a collection of 3-element subsets (called `blocks') from $\{1, 2,\ldots, n\}$ for which 
any two subsets intersect in exactly one point.  When an STS exists, it has exactly $b = \frac{n(n-1)}{6}$ blocks.  It is a  basic result in design theory (a branch of combinatorics \citep{van})
that an STS exists precisely when  the division of $n$ by 6  leaves a remainder of 1 or 3.  In particular, there exists an STS with $n=13$ ($=6 \times 2 +1$) and so with $b = 26$ blocks. 

Let us now suppose we have a method $\psi$ satisfying (A1$^-$)--(A4).   We take the taxon set as $X= \{1,2, \ldots, 13\}$ and we label the 26 blocks of the 
STS as $b_1, b_2, \ldots, b_{26}$.  For each block $b_i$,  let $T_{ij}$ (where $j=1,2,3$) denote the three
possible rooted binary trees we can construct that have the leaf set $b_i$.

Now, let $f: X \rightarrow \{1,2,3\}$ be a selection of one value of $j$ for each $i$, and consider the profile $\cp_f$ of trees $(T_{1f(1)}, T_{2f(2)}, \ldots, T_{26f(26)})$.  Each of these $3^{26}$ possible sequences of $26$ trees will comprise a possible input for $\psi$.

By (A1$^-$), $\psi(\cp_f)$ is a rooted phylogenetic tree, which  we will denote as $T_f$, on the  leaf set $X$, or some subset of these leaves.

By (A4),  taking  the set $Y=b_k$ as our subset of taxa we  obtain:
\begin{equation}
\label{psik}
T_f|b_k \mbox{ equals or refines }\psi(\cp_f|b_k).
\end{equation}
Now,
\begin{equation}
\label{psik2}
\psi(\cp_f|b_k) = \psi((T_{kf(k)})) = T_{kf(k)},
\end{equation}
since the first equality holds by repeated applications of (A3) (it is here that we use the STS property that $|b_j \cap b_k|=1$ for all $j \neq k$), and the second equality holds by (A2) in the special case $k=1$ (i.e. $\psi(\cp') = T$ for $\cp'=(T)$).

Combining (\ref{psik}) and (\ref{psik2}) (and noting that a rooted binary tree on three leaves cannot be further refined), we obtain:
\begin{equation}
T_f|b_k =  T_{kf(k)}.
\label{niceeq}
\end{equation}

Let $T'_f = T_f$ if the latter tree is binary; otherwise, let $T'_f$ denote any binary tree obtained from $T_f$ by resolving it arbitrarily.
Then:
\begin{equation} T'_f|b_k =  T_{kf(k)}.
\label{niceeq2}
\end{equation}

Notice that this implies that the leaf set of $T'_f$ must be all of $X$.
Moreover, Eqn. (\ref{niceeq2}) holds for all $3^{26}$ possible choices for $f$.   This gives us $3^{26}$ rooted binary trees, each on the leaf set $X$ of size 13 (one tree for each choice of $f$).

At this point, we invoke a crucial arithmetic fact: $3^{26}$ is larger than the total number of rooted binary trees on 13 leaves, which is $(23)!! = 1 \times 3 \times \cdots \times 23$.
Thus, by the `pigeonhole principle' \citep{van}, at least two of the binary trees $T'_f$ and $T'_{f'}$ must be equal for some pair $f \neq f'$.  But, by (\ref{niceeq2}),  this implies that $T_{kf(k)}= T_{kf'(k)}$ for
all $k$, and so $f=f'$. This contradiction establishes that the initial assumption of the existence of a method satisfying (A1$^-$)--(A4) is not possible.

\hfill$\Box$

\end{document}